\documentclass[11pt,a4paper]{article}

\usepackage[margin=1in]{geometry}
\usepackage{amsmath,amssymb,amsthm}
\usepackage{graphicx}
\usepackage{hyperref}

\hypersetup{
    colorlinks=true,
    linkcolor=blue,
    citecolor=blue,
    urlcolor=blue
}

\newtheorem{proposition}{Proposition}

\title{\textbf{Two-Species Boltzmann--Kac Kinetics with Matsumoto--Yor Nonlinear Collisions: Conserved Energy and Conic Equilibria}}

\author{
  \textbf{Yoshiaki Itoh}\\
  \small Institute of Statistical Mathematics\\ and Graduate University for Advanced Studies\\ Tachikawa, Tokyo 
\\
  \small \texttt{itoh@ism.ac.jp}
}

\date{\today}

\begin{document}

\maketitle

\begin{abstract}
We extend Mark Kac's kinetic framework from single-species Euclidean spaces to a two-species, non-linear interaction space on the open cone $(0, \infty)^2$ governed by the Matsumoto--Yor rational collision involution. We identify a non-linear conserved energy $E(x,y) = x + y + 1/x$ and an invariant conic metric measure $d\mu = \frac{dx \, dy}{x(x+y)}$. By establishing a two-species $H$-theorem, we prove that entropy production vanishes if and only if post-collision velocities are statistically independent ($I(U;V)=0$). In contrast to the transform-domain proof of Letac and Weso{\l}owski~\cite{letac2000}, which relies on joint Laplace transforms of general measures, our approach yields a direct physical derivation in state space via the transformation law for logarithmic gradients and separation of variables. This uncouples the functional-differential system to uniquely characterize the Generalized Inverse Gaussian (GIG) and Gamma distributions as the non-equilibrium thermodynamic equilibria.
\end{abstract}

\section{Introduction: Kac Kinetics and Binary Mixtures}

In non-equilibrium statistical mechanics, Mark Kac's~\cite{kac1956} collision model provides a mathematical bridge between microscopic dynamics and macroscopic thermodynamic equilibrium as a simplified formulation of Boltzmann's kinetic mechanics~\cite{boltzmann1872}. Consider two colliding particles with incoming velocities $(x, y) \in \mathbb{R}^2$ undergoing an elastic binary collision $\psi_\theta: (x, y) \to (u, v)$ defined by an orthogonal rotation:
\begin{equation}
\begin{pmatrix} u \\ v \end{pmatrix} = \begin{pmatrix} \cos\theta & \sin\theta \\ -\sin\theta & \cos\theta \end{pmatrix} \begin{pmatrix} x \\ y \end{pmatrix}
\end{equation}

This classical linear collision operator exhibits three fundamental physical properties:
\begin{enumerate}
    \item \textbf{Pointwise Energy Conservation:} Kinetic energy is strictly invariant: $E(u,v) = u^2 + v^2 = x^2 + y^2 = E(x,y)$.
    \item \textbf{Phase-Space Volume Conservation:} The transformation preserves the Lebesgue measure ($|\det J_{\psi_\theta}| = 1 \implies du \, dv = dx \, dy$).
    \item \textbf{Single-Species Thermodynamic Equilibrium:} Incoming independent velocities ($I(X; Y) = 0$) remain independent ($I(U; V) = 0$) \textbf{if and only if} all particles are identically Gaussian distributed: $f(x) \propto \exp\left(-\frac{\beta}{2} x^2\right)$.
\end{enumerate}

In this paper, we extend Kac's kinetic framework from single-species Euclidean space to a heterogeneous, two-species non-linear interaction space on the open cone $(0, \infty)^2$ driven by Matsumoto--Yor collisions \cite{matsumoto2001}. While the original characterization result of Letac and Weso{\l}owski \cite{letac2000} relies on joint Laplace transforms and expectation relations in the transform domain, our formulation provides a direct physical and differential proof operating entirely in physical state space. By transforming spatial logarithmic gradients and applying separation of variables, we uncouple the kinetic system to establish the GIG and Gamma equilibrium densities directly. This advances information-theoretic approaches \cite{itoh1970,itoh1970b, itoh1974, itoh1989} to classical Kac--Bernstein characterizations \cite{kac1939, bernstein1941}, which build upon information-theoretic proofs of the central limit theorem \cite{linnik1959, stam1959}. Furthermore, while the Matsumoto--Yor involution \cite{matsumoto2001} has established connections to integrable systems---most notably randomized Takahashi--Satsuma box-ball systems \cite{takahashi1990, croydon2023}---we here embed this geometric structure into non-linear kinetic theory as an active binary collision operator.

\section{Non-Linear Collision Operator and Pointwise Invariants}
Let $x \in (0, \infty)$ and $y \in (0, \infty)$ represent the internal state variables of two colliding particles belonging to distinct species, labeled Type $X$ and Type $Y$, respectively. We define the non-linear cross-species collision operator $\psi: (0, \infty)^2 \to (0, \infty)^2$ by:
\begin{equation}
u = \frac{1}{x+y}, \quad v = \frac{y}{x(x+y)} = \frac{1}{x} - \frac{1}{x+y}
\end{equation}
The operator $\psi$ is an asymmetric, non-linear rational involution ($\psi \circ \psi = \text{Identity}$), ensuring time-reversal symmetry of binary collisions between $X$ and $Y$ particles.

\begin{proposition}[Pointwise Conserved Energy]
The cross-species collision operator $\psi(x,y) = (u,v)$ strictly conserves the non-linear potential energy function $E: (0, \infty)^2 \to (0, \infty)$ defined by:
\begin{equation}
E(x,y) = x + y + \frac{1}{x}
\end{equation}
\end{proposition}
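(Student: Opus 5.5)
We verify directly that $E(u,v)=E(x,y)$ by substituting the explicit formulas for $u$ and $v$ from the definition of $\psi$. The computation reduces to three elementary pieces, $u$, $v$, and $1/u$, so the plan is to compute each separately and add them.

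First, from $v = \frac{1}{x} - \frac{1}{x+y}$ and $u = \frac{1}{x+y}$ we immediately get the telescoping sum
\[
u + v = \frac{1}{x+y} + \frac{1}{x} - \frac{1}{x+y} = \frac{1}{x}.
\]
Second, since $u = \frac{1}{x+y}$, we have $\frac{1}{u} = x+y$. Adding these gives
\[
E(u,v) = u + v + \frac{1}{u} = \frac{1}{x} + (x+y) = E(x,y),
\]
which is the claim. Positivity is routine: $x,y>0$ gives $u>0$ and $v = \frac{y}{x(x+y)}>0$, so $\psi$ maps the open cone into itself and $E(u,v)$ is well defined.

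There is no genuine obstacle. The only point worth noting is the structural mechanism: $\psi$ swaps the roles of the two pieces of $E$, in the sense that $u+v = 1/x$ and $1/u = x+y$. This also makes the result consistent with the involution property $\psi\circ\psi=\mathrm{Id}$, since applying the same identities to $(u,v)$ returns the original decomposition.
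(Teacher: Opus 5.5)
Your proof is correct and matches the paper's own argument: it uses the same two identities, $u+v = 1/x$ and $1/u = x+y$, and adds them to get $E(u,v)=E(x,y)$. Your extra check that $\psi$ maps the open cone into itself is a harmless addition that the paper leaves implicit.
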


\begin{proof}
Note that $u + v = \frac{1}{x}$ and $\frac{1}{u} = x + y$. Substituting these into $E(u,v)$:
\begin{equation}
E(u,v) = u + v + \frac{1}{u} = \frac{1}{x} + (x+y) = E(x,y)
\end{equation}
\end{proof}

\begin{proposition}[Invariant Metric Measure]
The conic metric measure $d\mu(x,y) = \frac{dx \, dy}{x(x+y)}$ is strictly invariant under $\psi$.
\end{proposition}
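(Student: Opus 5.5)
Since $\psi$ is a smooth involution of the open cone, I will use the change-of-variables formula: $\mu$ is invariant under $\psi$ exactly when the density $g(x,y)=\frac{1}{x(x+y)}$ satisfies
\begin{equation}
g(\psi(x,y))\,\left|\det J_\psi(x,y)\right| = g(x,y)
\end{equation}
for all $(x,y)\in(0,\infty)^2$. This pointwise identity then gives $\int \varphi\circ\psi \, d\mu = \int \varphi\, d\mu$ for every nonnegative measurable $\varphi$, that is, $\psi_\ast\mu=\mu$. Because $\psi\circ\psi=\mathrm{Id}$, $\psi$ is a bijection of the cone onto itself, so the change of variables is global and not merely local.

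The first step is the Jacobian. I will write $v = \frac{1}{x}-u$ with $u=\frac{1}{x+y}$, which gives
\begin{equation}
\frac{\partial u}{\partial x}=\frac{\partial u}{\partial y}=-\frac{1}{(x+y)^2}, \qquad \frac{\partial v}{\partial x}=-\frac{1}{x^2}+\frac{1}{(x+y)^2}, \qquad \frac{\partial v}{\partial y}=\frac{1}{(x+y)^2}.
\end{equation}
The determinant simplifies because the two $u$-partials coincide: subtracting columns leaves $\det J_\psi = -\frac{1}{x^2(x+y)^2}$, so $|\det J_\psi| = \frac{1}{x^2(x+y)^2}$.

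The second step evaluates the density at the image point. From the proof of Proposition~1 we already have $\frac{1}{u}=x+y$ and $u+v=\frac{1}{x}$. Hence
\begin{equation}
g(u,v)=\frac{1}{u(u+v)}=(x+y)\,x .
\end{equation}
Multiplying by the Jacobian factor gives $x(x+y)\cdot\frac{1}{x^2(x+y)^2} = \frac{1}{x(x+y)} = g(x,y)$, which is the required identity.

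There is no real obstacle here. The only points needing care are the sign and absolute value of the Jacobian, and the observation that bijectivity of $\psi$, which follows from the involution property, makes the local identity a global statement about the measure. As a check on the algebra, the identity is self-consistent under the involution: applying it at $\psi(x,y)$ and using $\psi\circ\psi=\mathrm{Id}$ reproduces it at $(x,y)$, by the chain rule $J_\psi(\psi(x,y))\,J_\psi(x,y)=I$.
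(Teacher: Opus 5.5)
Your proof is correct and follows the same route as the paper. You compute $\det J_\psi = -\frac{1}{x^2(x+y)^2}$ and substitute $1/u = x+y$ and $u+v = 1/x$ into the density $\frac{1}{u(u+v)}$. Your change-of-variables formulation ($\psi_\ast\mu = \mu$ via the pointwise identity, with global bijectivity from $\psi\circ\psi=\mathrm{Id}$) makes explicit what the paper's informal manipulation of $du\,dv$ leaves implicit.
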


\begin{proof}
The Jacobian determinant is $\det J_\psi(x,y) = -\frac{1}{x^2(x+y)^2}$. The differential volume element transforms as $du \, dv = \frac{dx \, dy}{x^2(x+y)^2}$. Substituting $u = \frac{1}{x+y}$ and $u+v = \frac{1}{x}$:
\begin{equation}
d\mu(u,v) = \frac{du \, dv}{u(u+v)} = \frac{\frac{dx \, dy}{x^2(x+y)^2}}{\left(\frac{1}{x+y}\right) \left(\frac{1}{x}\right)} = \frac{dx \, dy}{x(x+y)} = d\mu(x,y)
\end{equation}
\end{proof}

\section{Bipartite Kinetics and Coupled Boltzmann Equations}

\subsection{Bipartite Gas Structure and Intra-Species Collisions}
Because the collision operator $\psi(x,y)$ is fundamentally asymmetric in its arguments, the underlying kinetic system describes a bipartite binary mixture of Type $X$ and Type $Y$ particles. Collisions take place strictly across species ($X \leftrightarrow Y$). Intra-species collisions ($X \leftrightarrow X$ or $Y \leftrightarrow Y$) act as transparent identity maps:
\begin{equation}
\psi_{XX}(x_1, x_2) = (x_1, x_2), \qquad \psi_{YY}(y_1, y_2) = (y_1, y_2)
\end{equation}
Intra-species interactions leave single-particle state variables unchanged and contribute identically zero to entropy production. Thus, the dynamical evolution of the gas mixture is driven exclusively by cross-species collisions.

\subsection{The Coupled Two-Species Boltzmann System}
Let $f_X(x,t)$ and $f_Y(y,t)$ denote the single-particle state density functions on $(0,\infty)$ for Type $X$ and Type $Y$ particles, respectively. The non-linear Kac--Boltzmann evolution of the mixture is governed by the coupled system:
\begin{align}
\frac{\partial f_X(x,t)}{\partial t} &= \int_{0}^{\infty} \left[ f_X\left(\frac{1}{x+y}, t\right) f_Y\left(\frac{y}{x(x+y)}, t\right) \frac{1}{x^2(x+y)^2} - f_X(x,t) f_Y(y,t) \right] dy \label{eq:boltzmann_x} \\[6pt]
\frac{\partial f_Y(y,t)}{\partial t} &= \int_{0}^{\infty} \left[ f_X\left(\frac{1}{x+y}, t\right) f_Y\left(\frac{y}{x(x+y)}, t\right) \frac{1}{x^2(x+y)^2} - f_X(x,t) f_Y(y,t) \right] dx \label{eq:boltzmann_y}
\end{align}

\subsection{Mass Conservation and the Two-Species $H$-Theorem}
Integrating Eq.~(\ref{eq:boltzmann_x}) over $x$ and Eq.~(\ref{eq:boltzmann_y}) over $y$ confirms independent conservation of total mass for both particle species:
\begin{equation}
\frac{d}{dt} \int_{0}^{\infty} f_X(x,t) \, dx = 0, \qquad \frac{d}{dt} \int_{0}^{\infty} f_Y(y,t) \, dy = 0
\end{equation}

Define the total mixture kinetic entropy functional $H[f_X, f_Y] = \int_{0}^{\infty} f_X(x,t) \ln f_X(x,t) \, dx + \int_{0}^{\infty} f_Y(y,t) \ln f_Y(y,t) \, dy$. Differentiating $H$ along solutions of the coupled system yields:
\begin{equation}
\frac{dH}{dt} = -\frac{1}{2} \int_{0}^{\infty} \int_{0}^{\infty} \left( f_X(u)f_Y(v)|\det J_\psi| - f_X(x)f_Y(y) \right) \ln \left( \frac{f_X(u)f_Y(v)|\det J_\psi|}{f_X(x)f_Y(y)} \right) dx \, dy \le 0
\end{equation}
where $u = \frac{1}{x+y}$ and $v = \frac{y}{x(x+y)}$. Since $(a-b)(\ln a - \ln b) \ge 0$ for all $a,b > 0$, mixture entropy production is strictly non-positive ($\dot{H} \le 0$).

Stationarity ($\dot{H} = 0$) requires detailed balance pointwise across species:
\begin{equation}
f_X(u) f_Y(v) \left|\det J_\psi(x,y)\right| = f_X(x) f_Y(y) \label{eq:detailed_balance}
\end{equation}

\section{ Equilibrium Distributions via Mutual Information: A Direct Kinetic Approach to the Letac--Weso{\l}owski Characterization}

Mutual information measures the kinetic entropy production generated by cross-species collisions:
\begin{equation}
I(U; V) = D_{\text{KL}}\left(f_{U,V} \parallel f_U f_V\right) = \int_{0}^{\infty}\int_{0}^{\infty} f_{U,V}(u,v) \ln \left( \frac{f_{U,V}(u,v)}{f_U(u) f_V(v)} \right) du \, dv
\end{equation}

Thermodynamic equilibrium of the mixture requires $I(U; V) = 0$, forcing the pointwise identity:
\begin{equation}
\ln f_{U,V}(u,v) - \ln f_U(u) - \ln f_V(v) = 0 \label{eq:mi_zero}
\end{equation}

Applying conservation of probability mass ($f_{U,V}(u,v) = \frac{f_X(x) f_Y(y)}{|\det J_\psi(x,y)|}$) maps Eq.~(\ref{eq:mi_zero}) back onto input space:
\begin{equation}
\ln f_X(x) + \ln f_Y(y) - \ln f_U(u(x,y)) - \ln f_V(v(x,y)) = \ln \left| \det J_\psi(x,y) \right|
\end{equation}

Taking the spatial gradient $\nabla_{(x,y)} = \begin{pmatrix} \partial_x \\ \partial_y \end{pmatrix}$ yields the transformation law for the logarithmic gradients:
\begin{equation}
\begin{pmatrix} \rho_X(x) \\ \rho_Y(y) \end{pmatrix} = J_{\psi}(x,y)^T \begin{pmatrix} \rho_U(u) \\ \rho_V(v) \end{pmatrix} + \nabla_{(x,y)} \ln \left| \det J_{\psi}(x,y) \right| \label{eq:pullback}
\end{equation}
where $\rho_Z(z) = \frac{d}{dz} \ln f_Z(z)$ denotes the logarithmic derivative of the marginal density $f_Z$.

Evaluating $J_\psi(x,y)^T$ and the geometric forcing term $\nabla_{(x,y)} \ln \left| \det J_\psi(x,y) \right| = \begin{pmatrix} -\frac{2}{x} - \frac{2}{x+y} \\ -\frac{2}{x+y} \end{pmatrix}$ gives the component equations:
\begin{align}
\rho_X(x) &= -\frac{\rho_U(u)}{(x+y)^2} + \left( -\frac{1}{x^2} + \frac{1}{(x+y)^2} \right)\rho_V(v) - \frac{2}{x} - \frac{2}{x+y} \label{eq:sys1} \\
\rho_Y(y) &= -\frac{\rho_U(u)}{(x+y)^2} + \frac{\rho_V(v)}{(x+y)^2} - \frac{2}{x+y} \label{eq:sys2}
\end{align}

Subtracting Eq.~(\ref{eq:sys2}) from Eq.~(\ref{eq:sys1}) eliminates $\rho_U(u)$ and the cross term $\frac{2}{x+y}$:
\begin{equation}
\rho_X(x) - \rho_Y(y) = -\frac{1}{x^2}\rho_V(v) - \frac{2}{x} \label{eq:diff_eq}
\end{equation}

To uncouple this functional-differential system, we isolate $\rho_V(v)$ from Eq.~(\ref{eq:diff_eq}) as $\rho_V(v) = -x^2 \rho_X(x) - 2x + x^2 \rho_Y(y)$ and differentiate both sides with respect to $y$ at fixed $x$. Applying the chain rule on the left-hand side yields:
\begin{equation}
\frac{\partial}{\partial y} \rho_V(v) = \rho_V'(v) \frac{\partial v}{\partial y} = \frac{\rho_V'(v)}{(x+y)^2} = x^2 \rho_Y'(y) \implies \rho_V'(v) = x^2(x+y)^2 \rho_Y'(y)
\end{equation}

Using the Matsumoto--Yor algebraic identity $v = \frac{y}{x(x+y)} \implies x^2(x+y)^2 = \frac{y^2}{v^2}$, we rewrite the derivative relation as:
\begin{equation}
\rho_V'(v) = \frac{y^2}{v^2} \rho_Y'(y) \implies v^2 \rho_V'(v) = y^2 \rho_Y'(y)
\end{equation}

Because $v$ depends on both $x$ and $y$, the identity $v^2 \rho_V'(v) = y^2 \rho_Y'(y)$ represents a separation of variables. Each side must equal a universal separation constant, denoted $-(q-1)$:
\begin{equation}
y^2 \rho_Y'(y) = -(q-1), \qquad v^2 \rho_V'(v) = -(q-1)
\end{equation}

Integrating $y^2 \rho_Y'(y) = -(q-1)$ yields the logarithmic derivative for Type $Y$:
\begin{equation}
\rho_Y(y) = \frac{q-1}{y} - \frac{a}{2} \implies f_Y(y) \propto y^{q-1} \exp\left(-\frac{a}{2}y\right) \quad (\text{Gamma Distribution})
\end{equation}
where $-a/2$ is an integration constant setting the thermal inverse temperature scale $\beta = a/2$.

Similarly, integrating $v^2 \rho_V'(v) = -(q-1)$ gives $\rho_V(v) = \frac{q-1}{v} + C_V$. Substituting $\rho_Y(y)$ and $\rho_V(v)$ back into Eq.~(\ref{eq:diff_eq}) yields:
\begin{equation}
\rho_X(x) = \left( \frac{q-1}{y} - \frac{a}{2} \right) - \frac{1}{x^2} \left( \frac{q-1}{v} + C_V \right) - \frac{2}{x}
\end{equation}

By expanding $\frac{1}{x^2 v} = \frac{x(x+y)}{x^2 y} = \frac{1}{y} + \frac{1}{x}$, the term $\frac{q-1}{y}$ cancels identically, reducing the expression to:
\begin{equation}
\rho_X(x) = -\frac{a}{2} - \frac{q+1}{x} - \frac{C_V}{x^2}
\end{equation}

Setting $p = q$ and $C_V = -b/2$ gives the logarithmic derivative for Type $X$:
\begin{equation}
\rho_X(x) = \frac{-(p+1)}{x} - \frac{a}{2} + \frac{b}{2x^2} \implies f_X(x) \propto x^{-(p+1)} \exp\left(-\frac{a}{2}x - \frac{b}{2x}\right) \quad (\text{GIG Distribution})
\end{equation}

Thus, zero entropy production ($I(U; V) = 0$) uniquely characterizes the Generalized Inverse Gaussian (GIG) distribution for Type $X$ and the Gamma distribution for Type $Y$ as the equilibrium states of the system.

\section{Statistical Mechanical Interpretation: Canonical Gibbs Ensemble on the Conic Phase Space}
\label{sec:thermodynamics}

To establish the statistical mechanical foundation of the equilibrium state, we re-examine the uncoupled marginal distributions obtained in Section~4 from the perspective of canonical Gibbs ensembles. 

Combining Equation~(22) for the Type $Y$ Gamma distribution and Equation~(24) for the Type $X$ Generalized Inverse Gaussian (GIG) distribution, the joint probability density function at thermodynamic equilibrium under the flat Euclidean Lebesgue measure $dx \, dy$ takes the factorized form:
\begin{equation}
f_{X,Y}(x,y) = f_X(x) f_Y(y) = \frac{1}{Z} \, x^{1-q} y^{q-1} \exp\left( -\beta x - \frac{\beta}{x} - \beta y \right),
\label{eq:joint_equilibrium_flat}
\end{equation}
where $\beta = a/2 > 0$ represents the inverse temperature of the two-species gas, $q > 0$ is the shape parameter, and $Z$ is the two-species partition function:
\begin{equation}
Z = \int_{0}^{\infty} \int_{0}^{\infty} x^{1-q} y^{q-1} \exp\left( -\beta \left( x + y + \frac{1}{x} \right) \right) dx \, dy = \frac{\Gamma(q)}{\beta^q} \cdot 2 K_{1-q}(2\beta).
\end{equation}
That is an excellent catch! The index in the modified Bessel function should be $2-q$ (or $q-2$), rather than $1-q$.

Here is the exact step-by-step integration showing how $2 K_{2-q}(2\beta)$ arises.

1. Standard Integral Representation of $K_\nu(z)$
The modified Bessel function of the second kind (Macdonald function) $K_\nu(z)$ is defined by the integral:

$$K_\nu(z) = \frac{1}{2} \int_{0}^{\infty} x^{\nu - 1} \exp\left( -\frac{z}{2} \left( x + \frac{1}{x} \right) \right) dx$$

2. Evaluating the $x$-Integral ($I_x$)
The integral over $x$ in the partition function $Z$ is:

$$I_x = \int_{0}^{\infty} x^{1-q} \exp\left( -\beta \left( x + \frac{1}{x} \right) \right) dx$$

Matching terms with the standard definition of $K_\nu(z)$:

Argument $z$: $\frac{z}{2} = \beta \implies z = 2\beta$

Order $\nu$: $\nu - 1 = 1 - q \implies \nu = 2 - q$

Applying the definition gives:

$$I_x = 2 K_{2-q}(2\beta)$$

Using the reflection symmetry property $K_{-\nu}(z) = K_\nu(z)$, this can equivalently be written as $2 K_{q-2}(2\beta)$.

3. Evaluating the $y$-Integral ($I_y$)
The integral over $y$ is a standard Gamma integral:

$$I_y = \int_{0}^{\infty} y^{q-1} \exp(-\beta y) \, dy = \frac{\Gamma(q)}{\beta^q}$$

4. Final Partition Function $Z$
Multiplying $I_x$ and $I_y$ gives the exact partition function:

$$Z = I_x \cdot I_y = \frac{2 \, \Gamma(q)}{\beta^q} \, K_{2-q}(2\beta)$$

\subsection{Separation of Energy and Geometric Density of States}

Equation~\eqref{eq:joint_equilibrium_flat} reveals a fundamental physical decomposition. The joint density factors cleanly into a thermodynamic Boltzmann factor and a coordinate-space geometric density of states:
\begin{equation}
f_{X,Y}(x,y) = \frac{1}{Z} \, \underbrace{g(x,y)}_{\text{Geometric Density}} \, \times \, \underbrace{\exp\left(-\beta E(x,y)\right)}_{\text{Thermodynamic Weight}},
\label{eq:gibbs_factorization}
\end{equation}
where $E(x,y)$ is the pointwise conserved kinetic energy functional across collisions:
\begin{equation}
E(x,y) = x + y + \frac{1}{x},
\end{equation}
and $g(x,y)$ represents the scale-invariant geometric density of states on the positive cone $(0, \infty)^2$:
\begin{equation}
g(x,y) = x^{1-q} y^{q-1}.
\end{equation}

This factorization provides a direct non-Euclidean analogue to the classical 3D Maxwell--Boltzmann speed distribution:
\begin{equation}
\mathrm{d}\mathbb{P}(v) \propto v^2 \exp\left( -\frac{m v^2}{2 k_B T} \right) dv.
\end{equation}
Just as the $v^2$ prefactor in 3D velocity space reflects the surface area of a 2-sphere at speed $v$ (geometric density of states) competing against the thermal exponential penalty $e^{-\beta E(v)}$, the power-law prefactor $g(x,y) = x^{1-q} y^{q-1}$ reflects the available phase-space volume on the cone $(0,\infty)^2$ competing against the non-linear potential energy barrier $e^{-\beta(x + y + 1/x)}$.

\subsection{Representation Under the Invariant Conic Metric}

Alternatively, the thermodynamic state can be expressed relative to the scale-invariant metric measure $d\mu(x,y) = \frac{dx \, dy}{x(x+y)}$, which satisfies $\psi^* d\mu = d\mu$ under the Matsumoto--Yor involution $\psi(x,y) = \left( \frac{1}{x+y}, \frac{y}{x(x+y)} \right)$. 

Equating the differential probability elements across coordinate systems,
\begin{equation}
\mathrm{d}\mathbb{P}(x,y) = f_{X,Y}(x,y) \, dx \, dy = P(x,y) \, d\mu(x,y),
\end{equation}
yields the invariant canonical weight $P(x,y)$:
\begin{equation}
P(x,y) = f_{X,Y}(x,y) \cdot x(x+y) = \frac{1}{Z} \, x^{2-q} y^{q-1} \left( 1 + \frac{y}{x} \right) \exp\left( -\beta E(x,y) \right).
\label{eq:invariant_weight_P}
\end{equation}

In Equation~\eqref{eq:invariant_weight_P}, the metric prefactor $\tilde{g}(x,y) = x^{2-q} y^{q-1} \left( 1 + \frac{y}{x} \right)$ isolates the pure geometric volume weighting under $d\mu(x,y)$, while $\exp(-\beta E(x,y))$ acts as the strictly invariant energy weight. The emergence of the GIG distribution for species $X$ and the Gamma distribution for species $Y$ is thus understood as the unique statistical mechanics equilibrium balancing non-linear energy conservation against the scale-invariant metric geometry of the cone.

\begin{table}[h]
\centering
\caption{Comparison between classical Kac--Bernstein linear kinetics and the two-species conic model.}
\vspace{0.2cm}
\begin{tabular}{lll}
\hline
\textbf{Kinetic Property} & \textbf{Kac--Bernstein Linear Model} & \textbf{Two-Species Conic Model} \\
\hline
State Space & Euclidean Space $\mathbb{R}^2$ & Open Cone $(0, \infty)^2$ \\
Collision Geometry & Linear Orthogonal Map & Non-linear Rational Involution \\
Equilibrium Condition & $I(U;V) = 0$ ($H$-Theorem) & $I(U;V) = 0$ ($H$-Theorem) \\
Equilibrium Distribution & \textbf{Gaussian} & \textbf{GIG ($X$) / Gamma ($Y$)} \\
\hline
\end{tabular}
\end{table}

{\bf Acknowledgements}
The author is helped  by the generative AI (Gemini)  
  for the mathematical derivations and the   theoretical formulations in advancing  his previous studies \cite{itoh1970, itoh1970b, itoh1974,itoh1989}.

\end{document}